\DeclareMathAlphabet{\mathpzc}{OT1}{pzc}{m}{it}
\newtheorem{theorem}{Theorem}[section]
\begin{document}

\vspace*{0.4 in}
\title{On the solvability of confluent Heun equation and associated orthogonal polynomials}
\author{Nasser Saad}
\email{nsaad@upei.ca}
\affiliation{Department of Mathematics and Statistics,
University of Prince Edward Island, 550 University Avenue,
Charlottetown, PEI, Canada C1A 4P3.}

\begin{abstract}
\noindent \textbf{Abstract:} The present paper analyze the constraints on the confluent Heun type-equation, 
$(a_{3,1}r^2+a_{3,2}r)y''+(a_{2,0}r^2+a_{2,1}r+a_{2,2})y'-(\tau_{1,0}r+\tau_{1,1})y=0,$ where  $|a_{3,1}|^2+|a_{3,2}|^2\neq 0,
$ and $a_{i,j},i=3,2,1, j=0,1,2$ are real parameters, to admit polynomial solutions. The necessary and sufficient conditions for the existence of these polynomials are given. A three-term recurrence relation is provided to generate the polynomial solutions explicitly. We, then, prove that these polynomial solutions are a source of finite sequences of orthogonal polynomials. Several properties, such as the recurrence relation, Christoffel-Darboux formulas and the moments of the weight function, are discussed. We also show a factorization property of these orthogonal polynomials that allow for the construction of other sequences of orthogonal polynomials. For illustration, we examines the quasi- exactly solvability of  the $(p,q)$-hyperbolic potential $V(r)=-V_0\sinh^p(r)/\cosh^q(r), V_0>0, p\geq 0, q>p$.  The associated orthogonal polynomials generated by the solutions of the Schr\"odinger equation with the $(4,6)$-hyperbolic potential are constructed.

\end{abstract}

\pacs{33C05,  33C47, 47E05, 34B60, 33C15, 34A05 , 34B30, 81Q05}
\keywords{Confluent Heun equation, polynomial solutions, finite sequences of orthogonal polynomials, Christoffel-Darboux formula, hyperbolic potentials.}
\maketitle
\section{Introduction} 
\noindent The second-order differential equation 
\begin{align}\label{eq1}
(a_{3,1}r^2+a_{3,2}r)\,\frac{d^2y}{dr^2}+(a_{2,0}r^2+a_{2,1}r+a_{2,2})\,\frac{dy}{dr}-(\tau_{1,0}r+\tau_{1,1})y=0,\qquad a_{3,1}^2+a_{3,2}^2\neq 0,~~a_{2,0}^2+\tau_{1,0}^2\neq 0,
\end{align}
with parameters  $a_{i,j}\in \mathbb R,~i=3,2,1,~ j=0,1,2$ has two regular singular points $r=0$ and $r=-a_{3,2}/a_{3,1}$ with exponents $\{0,1-a_{2,2}/a_{3,2}\}$  and $\{0,1-{a_{2,1}}/{a_{3,1}}+{a_{2,2}}/{a_{3,2}}+{a_{2,0}a_{3,2}}/{a_{3,1}^2}\}$,  respectively, in addition to an irregular singular point at infinity unless $a_{2,2}=\tau_{1,1}=0$. 
\vskip0.1true in
\noindent  Although equation \eqref{eq1} generalized the standard confluent Heun equation
\cite{ron1995}, we shall keep the same name `the confluent Heun equation' for \eqref{eq1} as well.
The name confluent driven from the coalescence process of singularities of the general Heun equation \cite{ron1995,dec1978a,dec1978b}. 
\vskip0.1true in
\noindent  Over the past few decades, the confluent Heun equation appeared intensively in mathematical physics applications. Including but not limited to, in studying the spherical Coulomb functions, the spheroidal wave functions, the Mathieu equation and confinement potentials in quantum mechanics. Most recently, it also found some interesting applications in General Relativity and Astrophysics
 \cite{saad2009,fiz2010,jaick2008,cheb2004,fiz2011,chris2011,ru2011}.
\vskip0.1true in
\noindent The purpose of this article is twofold. First, we analyze the necessary and sufficient conditions under which the differential equation \eqref{eq1} admit the polynomial solutions $y_n(r)=\sum_{k=0}^n \mathcal C_k r^k$, $n=0,1,2,\dots$, Where we set up a scheme for constructing the polynomial coefficients $\mathcal C_k$ explicitly. Second, we establish a correspondence between each solution $y_n(r)$ and a finite sequence of orthogonal polynomials 
$\{P_k(\tau_{1,1})\}_{k=0}^n$. We discuss some properties of these orthogonal polynomials such as the recurrence relation,  factorization, Christoffel-Darboux formulas, and the moments of the weight function.
\vskip0.1true in
\noindent  As an illustration of these polynomials, we explore the quasi- exact solutions of the one-dimensional Schr\"odinger equation with a hyperbolic potential \cite{downing2013} 
\begin{align}\label{eq2}
-\frac{\hbar^2}{2m}\dfrac{d^2}{dx^2}\psi_n(x)-V_0\,\frac{\sinh^p(x/d)}{\cosh^q(x/d)}\psi_n(x)=E\,\psi_n(x),\quad -\infty<x<\infty,
\end{align}
with special attention to the case of $p=4$ and $q=6$. We show under certain  transformations, this equation reduces to a confluent Heun equation of type \eqref{eq1}. We discuss the associated sequence of orthogonal polynomials $\{\mathpzc{P}_k\}_{k=0}^n$ generated by the solution $\psi_n,n\geq 1$ and investigate some of their properties.
\vskip0.1true in
\noindent The article organized as follows. In Section II, we gave the necessary and sufficient conditions for the polynomial solvability constraints of equation \eqref{eq1} with a general procedure for formulating the polynomial solution explicitly. In Section III, we establish a correspondence between that the solution $y_n$ and a set of orthogonal polynomials 
$\{\mathpzc{P}_k\}_{k=0}^n$. In that section, we also derive the recurrence relation, the Christoffel-Darboux formula and prove the factorization property of these polynomials. The weight functions and some explicit expressions calculating the moments of the weight functions also discussed.   In Section IV, we discuss the quasi- exact solvable quantum model, given by a hyperbolic potential and explicitly construction the finite sequences of orthogonal polynomials associated with the wavefunction solutions. 
\section{Polynomial solvability constraints}
\noindent The $n$-th derivatives of equation \eqref{eq1} yields
\begin{align}\label{eq3}
(a_{3,1}r^2&+a_{3,2}r)y^{(n+2)}+(a_{2,0}r^2+[a_{2,1}+2na_{3,1}]r+na_{3,2}+a_{2,2})y^{(n+1)}\notag\\
&+\left((2na_{2,0}-\tau_{1,0})r+n(n-1)a_{3,1}+na_{2,1}-\tau_{1,1}
\right)y^{(n)}+\left(n(n-1)a_{2,0}-n\tau_{1,0}
\right)y^{(n-1)}=0,\quad n=0,1,\dots.
\end{align}
Thus, the necessary condition for the $n$-degree polynomial solution is $\tau_{1,0}=n\, a_{2,0}, n=0,1,\dots
$
while the sufficient condition follow using the recurrence relation
\begin{equation}\label{eq4}
(k+1)\left(ka_{3,2}+a_{2,2}\right){\mathcal C}_{k+1}+\left(k(k-1)a_{3,1}+ka_{2,1}-\tau_{1,1}\right)){\mathcal C}_k+\left((k-1)a_{2,0}-\tau_{1,0}\right)){\mathcal C}_{k-1}=0,\quad {\mathcal C}_{-1}=0, \quad {\mathcal C}_{0}=1,
\end{equation}
according to the following simple procedure.
\vskip0.1true in
\noindent  For the zero-degree polynomial solution $y_0(r)= 1$: the necessary and sufficient conditions, respectively, reads 
\begin{equation}\label{eq5}
\tau_{1,0}=0,\qquad\tau_{1,1}=0. 
\end{equation}
For the first-degree polynomial solution 
 \begin{equation}\label{eq6}
y_1(r)=1+\dfrac{\tau_{1,1}}{a_{2,2}}\, r,
\end{equation}
the necessary and sufficient conditions, respectively, reads 
\begin{equation}\label{eq7}
\tau_{1,0}=a_{2,0},\qquad 
\left|\begin{array}{lll}
-\tau_{1,1}&a_{2,2}\\
-a_{2,0}& a_{2,1} -\tau_{1,1}
\end{array}\right|=0.
\end{equation}
For the second-order polynomial solution
 \begin{equation}\label{eq8}
y_2(r)=1+\dfrac{\tau_{1,1}}{a_{2,2}}\, r +\frac{2 a_{2,0}a_{2,2}-a_{2,1}\tau_{1,1}+\tau_{1,1}^2}{2a_{2,2}(a_{2,2}+ a_{3,2})}\,r^2,
\end{equation}
the necessary and sufficient conditions, respectively, reads 
\begin{equation}\label{eq9}
\tau_{1,0}=2a_{2,0},\qquad 
\left|\begin{array}{lll}
-\tau_{1,1}&a_{2,2}&0\\
-2a_{2,0}& a_{2,1} -\tau_{1,1}&2a_{3,2}+2a_{2,2}\\
0&-a_{2,0}&2a_{3,1}+2a_{2,1}-\tau_{1,1}
\end{array}\right|=0.
\end{equation}
For the third-order polynomial solution
 \begin{align}\label{eq10}
y_3(r)&=1+\dfrac{\tau_{1,1}}{a_{2,2}}\, r +\frac{3 a_{2,0}a_{2,2}-a_{2,1}\tau_{1,1}+\tau_{1,1}^2}{2a_{2,2}(a_{2,2}+ a_{3,2})}\,r^2\notag\\
&+
\frac{-6a_{2,0}a_{2,2} (a_{2,1}+a_{3,1})+a_{2,0} (7 a_{2,2}+4a_{3,2}) \tau_{1,1}+(a_{2,1}-\tau_{1,1}) (2 (a_{2,1}+a_{3,1})-\tau_{1,1})\tau_{1,1}}{6a_{2,2} (a_{2,2}+a_{3,2}) (a_{2,2}+2a_{3,2})}\,r^3,
\end{align}
the necessary and sufficient conditions, respectively, reads 
\begin{equation}\label{eq11}
\tau_{1,0}=3a_{2,0},\qquad 
\left|\begin{array}{llll}
-\tau_{1,1}&a_{2,2}&0&0\\
-3a_{2,0}& a_{2,1} -\tau_{1,1}&2a_{3,2}+2a_{2,2}&0\\
0&-2a_{2,0}&2a_{3,1}+2a_{2,1}-\tau_{1,1}&6a_{3,2}+3a_{2,2}\\
0&0&-a_{2,0}&6a_{3,1}+3a_{2,1}-\tau_{1,1}
\end{array}\right|=0.
\end{equation}
In general, for an $n$-degree polynomial solution of the confluent Heun equation, we have the following theorem.
\begin{theorem} \label{ThmII.1}
The necessary and sufficient conditions for an $n$-degree polynomial solution of equation \eqref{eq1}, respectively, are
\begin{equation}\label{eq11}
\tau_{1,0}=n\, a_{2,0},\qquad n=0,1,\dots,
\end{equation}
and the vanishes of the $(n+1)\times (n+1)$ determinant
\begin{align}\label{eq13}
\left|\begin{array}{lllllll}
-\tau_{1,1}&a_{2,2}&0&0&\dots& 0&0\\
-\tau_{1,0}& a_{2,1} -\tau_{1,1}&2a_{3,2}+2a_{2,2}&0&\dots&0&0\\
0&a_{2,0}-\tau_{1,0}&2a_{3,1}+2a_{2,1}-\tau_{1,1}&6a_{3,2}+3a_{2,2}&\dots&0&0\\
0&0&2a_{2,0}-\tau_{1,0}&6a_{3,1}+3a_{2,1}-\tau_{1,1}&\dots&0&0\\
\vdots&\vdots&\vdots&\ddots&\ddots&\vdots&\vdots\\
0&0&0&0&\ddots&(n-1)a_{2,0}-\tau_{1,0}&n(n-1)a_{3,1}+na_{2,1}-\tau_{1,1}\\
\end{array}\right|=0.
\end{align}
\end{theorem}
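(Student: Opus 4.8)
The plan is to route everything through the three-term recurrence \eqref{eq4} (which results from matching the coefficient of $r^k$ after substituting $y=\sum_{k\ge0}\mathcal C_k r^k$ into \eqref{eq1}). The crux is a bookkeeping identification: writing $M$ for the $(n+1)\times(n+1)$ matrix in \eqref{eq13}, with rows and columns indexed by $0,\dots,n$, row $k$ of $M$ is exactly the $k$-th instance of \eqref{eq4} regarded as a linear equation in $\mathcal C_0,\dots,\mathcal C_n$, the term $(n+1)(na_{3,2}+a_{2,2})\mathcal C_{n+1}$ being dropped from the $k=n$ row; concretely the $(k,k-1)$, $(k,k)$ and $(k,k+1)$ entries of $M$ are $(k-1)a_{2,0}-\tau_{1,0}$, $\,k(k-1)a_{3,1}+ka_{2,1}-\tau_{1,1}$, and $(k+1)(ka_{3,2}+a_{2,2})$. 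Throughout one keeps the tacit non-resonance hypothesis $ja_{3,2}+a_{2,2}\neq0$ for $j=0,\dots,n-1$ (already visible in the denominators of \eqref{eq6}--\eqref{eq10}), which is exactly what lets \eqref{eq4} be solved step by step for $\mathcal C_1,\dots,\mathcal C_n$ from $\mathcal C_0$.

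For necessity, suppose \eqref{eq1} has a polynomial solution $y_n(r)=\sum_{k=0}^n\mathcal C_k r^k$ of degree $n$. Its coefficients satisfy \eqref{eq4} for every $k\ge0$, with $\mathcal C_m=0$ for $m>n$. Taking $k=n+1$ collapses \eqref{eq4} to $(na_{2,0}-\tau_{1,0})\mathcal C_n=0$ with $\mathcal C_n\neq0$, hence \eqref{eq11} --- the leading-coefficient constraint already recorded after \eqref{eq3}. Taking $k=0,1,\dots,n$, and using $\mathcal C_{n+1}=0$ in the last one, says precisely $M\mathbf c=0$ with $\mathbf c=(\mathcal C_0,\dots,\mathcal C_n)^{T}$; since $y_n\neq0$ we have $\mathbf c\neq0$, so $\det M=0$, which is condition \eqref{eq13}.

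For sufficiency, assume \eqref{eq11} and \eqref{eq13}. Vanishing of $\det M$ provides a nonzero $\mathbf c=(\mathcal C_0,\dots,\mathcal C_n)^{T}$ with $M\mathbf c=0$; non-resonance forces $\mathcal C_0\neq0$ (if $\mathcal C_0=0$ then the $k=0$ row of $M\mathbf c=0$ gives $\mathcal C_1=0$, the $k=1$ row gives $\mathcal C_2=0$, and inductively $\mathbf c=0$), so we may normalise $\mathcal C_0=1$, whereupon $\mathcal C_1,\dots,\mathcal C_n$ are the numbers generated by running \eqref{eq4} forward and \eqref{eq5}--\eqref{eq10} are the first instances of this scheme. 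By construction these $\mathcal C_k$ satisfy \eqref{eq4} for $k=0,\dots,n-1$ and, through the last row of $M$, for $k=n$ with $\mathcal C_{n+1}:=0$. Putting $\mathcal C_m:=0$ for all $m>n$, condition \eqref{eq11} makes the $k=n+1$ instance of \eqref{eq4} hold too (its only surviving term $(na_{2,0}-\tau_{1,0})\mathcal C_n$ vanishes), and the instances $k\ge n+2$ hold trivially. Hence $\{\mathcal C_k\}_{k\ge0}$ satisfies \eqref{eq4} for every $k$, equivalently $y_n(r)=\sum_{k=0}^n\mathcal C_k r^k$ solves \eqref{eq1}, and it is a polynomial of degree $n$ in the generic case $\mathcal C_n\neq0$.

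The one step calling for a little care is the identification of \eqref{eq13} with the truncated recurrence system, together with the passage ``$\det M=0\Rightarrow$ a null vector with $\mathcal C_0\neq0$'' above; both become transparent once one looks at the minor of $M$ obtained by deleting row $n$ and column $0$. That minor is an $n\times n$ lower triangular block with diagonal entries $(j+1)(ja_{3,2}+a_{2,2})$, $j=0,\dots,n-1$, hence of determinant $n!\prod_{j=0}^{n-1}(ja_{3,2}+a_{2,2})\neq0$ under the non-resonance hypothesis. For the forward-generated $\mathbf c$ (with $\mathcal C_0=1$) one has $M\mathbf c=(0,\dots,0,\lambda)^{T}$, where $\lambda=(n(n-1)a_{3,1}+na_{2,1}-\tau_{1,1})\mathcal C_n+((n-1)a_{2,0}-\tau_{1,0})\mathcal C_{n-1}$ is the left side of the $k=n$ instance of \eqref{eq4} with its $\mathcal C_{n+1}$ term suppressed. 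Since $\mathcal C_0=1$, replacing the first column of $M$ by $M\mathbf c$ does not change $\det M$; expanding the resulting determinant along that column then yields $\det M=(-1)^{n}\,n!\left(\prod_{j=0}^{n-1}(ja_{3,2}+a_{2,2})\right)\lambda$. Thus \eqref{eq13} vanishes exactly when $\lambda=0$, i.e.\ exactly when the forward recursion closes up with $\mathcal C_{n+1}=0$. I expect this identity to be the only genuinely computational ingredient, the rest being linear bookkeeping.
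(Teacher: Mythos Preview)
Your proof is correct and follows the same recurrence-based route the paper sets up in equations \eqref{eq3}--\eqref{eq4}: the paper obtains the necessary condition $\tau_{1,0}=na_{2,0}$ from the $n$-th derivative identity \eqref{eq3} (you obtain it equivalently from the $k=n+1$ instance of \eqref{eq4}), and then reads the determinant \eqref{eq13} off the tridiagonal coefficient recurrence, illustrating this for $n=0,1,2,3$ before stating the theorem. Your write-up is in fact more complete than the paper's, which does not spell out the sufficiency direction or the non-resonance hypothesis $ja_{3,2}+a_{2,2}\neq0$; your minor computation and column-replacement identity $\det M=(-1)^n n!\prod_{j=0}^{n-1}(ja_{3,2}+a_{2,2})\,\lambda$ make explicit why \eqref{eq13} is exactly the closure condition $\mathcal C_{n+1}=0$ for the forward recursion, and your observation that any null vector of $M$ must have $\mathcal C_0\neq0$ (hence is the forward-generated one) fills a step the paper leaves implicit.
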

\vskip0.1true in
\noindent Theorem \ref{ThmII.1} generalized the two conditions on the standard confluent Heun equation to allow for the $n$-degree polynomial solutions \cite{fiz2010, fiz2011}.
  \vskip0.1true in
\noindent \textbf{Remark II.1:} If $a_{3,2}=a_{2,2}=0$, the differential equation \eqref{eq1} reduces to
\begin{align}\label{eq14}
a_{3,1}r^2\,\frac{d^2y}{dr^2}+r\,(a_{2,0}r+a_{2,1})\frac{dy}{dr}-(\tau_{1,0}r+\tau_{1,1})y=0,
\end{align}
that have the $n$-degree polynomial solutions expressed in terms of the confluent hypergeometric function as
\begin{align}\label{eq15}
y_{n;k}(r)=r^k {}_1F_1\left(k-n;2k+\frac{a_{2,1}}{a_{3,1}};-\frac{a_{2,0}}{a_{3,1}}\,r\right),\qquad k=0,1,\dots n,
\end{align}
 provided that:
$\tau_{1,0}=n\,a_{2,0}$ and $\tau_{1,1}=k(k-1)\,a_{3,1}+k\,a_{2,1}, \quad k=0,1,2,\dots,n,\quad n=0,1,2,\dots.
$

\section{Finite sequences of orthogonal polynomials}
\noindent Subject to the necessary and sufficient conditions \eqref{eq13}, the polynomial solutions \eqref{eq6}, \eqref{eq8} and \eqref{eq10} of the differential equation \eqref{eq1} can be generalized as
\begin{align}\label{eq16}
y_n(r)&=\sum\limits_{k=0}^n\dfrac{\mathpzc{P}_{k}^n(\tau_{1,1})}{k!\, a_{3,2}^k\,\left(\frac{a_{2,2}}{a_{3,2}}\right)_k}\,r^k,\quad n=0,1,2,\dots,
\end{align}
where $(\alpha)_n$ refer to the pochhammer symbol $(\alpha)_n=\alpha(\alpha+1)\dots(\alpha-n+1)$ satisfies $(-n)_k=0$ for any positive integers $k\geq n$.  
\vskip0.1true in
\noindent The coefficients $\mathpzc{P}_{k}^n(\zeta),~k=0,\dots,n$  are a sequence of $k$-degree polynomials in the parameter $\zeta=\tau_{1,1}$. This sequence of polynomials will be our focus in the present work. Indeed, for each $n$, the solution $y_n(r)$ of the differential equation \eqref{eq1} generate a finite sequence of polynomials  $\{\mathpzc{P}_{k}^j(\tau_{1,1})\}_{k=0}^n$ satisfying a number of very interesting properties. 
\vskip0.1true in
\subsection{Three-term recurrence relation}
\noindent The first property is that they satisfy a three-term recurrence relation 
\begin{equation}\label{eq17}
\mathpzc{P}_{k+1}^n(\zeta)+\left(k(k-1)a_{3,1}+ka_{2,1}-\zeta\right)\mathpzc{P}_k^n(\zeta)+k\,((k-1)a_{3,2}+a_{2,2})\left((k-1)a_{2,0}-\tau_{1,0}\right)\mathpzc{P}_{k-1}^n(\zeta)=0,\quad 0\leq k\leq n-1,
\end{equation}
initiated with $\mathpzc{P}_{-1}^n(\zeta)=0, ~ \mathpzc{P}_{0}^n(\zeta)=1.$ This recurrence relation can be obtained by the fact that $y_n(r)$ satisfies the differential equation \eqref{eq1}.
Using the necessary condition $\tau_{1,0}=n\, a_{2,0}$, equation \eqref{eq17} reduce to
\begin{equation}\label{eq18}
\mathpzc{P}_{k+1}^n(\zeta)=\left(\zeta-k(k-1)a_{3,1}-ka_{2,1}\right)\mathpzc{P}_k^n(\zeta)-k\,(k-n-1) a_{2,0}((k-1)a_{3,2}+a_{2,2})\mathpzc{P}_{k-1}^n(\zeta),~~\mathpzc{P}_{-1}^n(\zeta)=0, \quad \mathpzc{P}_{0}^n(\zeta)=1,
\end{equation}
for $k=0,1,2,\dots n-1$. Here, $n$ is a (fixed) nonnegative integer refer to the degree of the polynomial solution \eqref{eq16}. 

\vskip0.1true in
\noindent  From the general theory of orthogonal polynomials and according to Favard's Theorem  \cite{Favard,chihara,ismail}),  
$\{\mathpzc{P}_{k}^n(\zeta)\}_{k=0}^n$  form a sequence of orthogonal polynomials. The first few expressions of these polynomials reads
\begin{align*}
\mathpzc{P}_1^n(\zeta)&=\zeta,\\
\mathpzc{P}_2^n(\zeta)&=\zeta^2-a_{2,1}\zeta+n\,a_{2,0}\,a_{2,2},\\
\mathpzc{P}_3^n(\zeta)&=\zeta^3 -(3a_{2,1}+2a_{3,1})\zeta^2+\left(2 a_{2,1}^2+2 a_{2,1} a_{3,1}+a_{2,0} a_{2,2}n-2 (a_{2,2}+a_{3,2}) (a_{2,0}-a_{2,0}n)\right)\zeta-2na_{2,0}a_{2,2}(a_{2,1}+a_{3,1})\\
\mathpzc{P}_4^n(\zeta)&=\zeta^4-2(3a_{2,1}+4a_{3,1})\zeta^3+(11a_{2,1}^2+26 a_{2,1}a_{3,1}+12 a_{3,1}^2+2 a_{2,0} (a_{2,2} (3n-4)+a_{3,2} (4n-7)))\zeta^2\\
&-2 \left(3 (a_{2,1}+a_{3,1}) \left(a_{2,1}^2-2a_{2,0}a_{2,2}+2a_{2,1}a_{3,1}\right)- a_{2,0} a_{3,2} (9a_{2,1}+6 a_{3,1})+na_{2,0} (a_{2,2} (7a_{2,1}+10 a_{3,1})+6 (a_{2,1}+a_{3,1}) a_{3,2})\right)\zeta\\
&+3na_{2,0}a_{2,2} (2 (a_{2,1}+a_{3,1}) (a_{2,1}+2 a_{3,1})+a_{2,0}  (n-2)(a_{2,2}+2 a_{3,2})).
\end{align*}

\noindent An important consequence of the three-term recurrence formula \eqref{eq18} is the following Christoffel-Darboux formula.
\vskip0.1true in
\begin{theorem} For $\zeta\neq \zeta'$,
\begin{align}\label{eq19}
\sum_{j=0}^k\frac{\mathpzc{P}_{j}^n(\zeta)\mathpzc{P}_j^n(\zeta')}{j!\,
(a_{2,0} a_{3,2})^j\left(\frac{a_{2,2}}{a_{3,2}}\right)_j(-n)_j
}
&=\frac{\mathpzc{P}_{k+1}^n(\zeta)\mathpzc{P}_{k}^n(\zeta')-\mathpzc{P}_{k}^n(\zeta)
\mathpzc{P}_{k+1}^n(\zeta')}{k!\,
(a_{2,0} a_{3,2})^k\left(\frac{a_{2,2}}{a_{3,2}}\right)_k(-n)_k (\zeta-\zeta')}, \qquad k=1,\dots, n-1,
\end{align}
while for $\zeta'\rightarrow \zeta$
\begin{align}\label{eq20}
\sum_{j=0}^k\frac{\big(\mathpzc{P}_{j}^n(\zeta)\big)^2}{j!\,
(a_{2,0} a_{3,2})^j\left(\frac{a_{2,2}}{a_{3,2}}\right)_j(-n)_j
}
&=\frac{[\mathpzc{P}_{k+1}^n(\zeta)]'\mathpzc{P}_{k}^n(\zeta)-[\mathpzc{P}_{k}^n(\zeta)]'
\mathpzc{P}_{k+1}^n(\zeta)
}{k!\,
(a_{2,0} a_{3,2})^k\left(\frac{a_{2,2}}{a_{3,2}}\right)_k(-n)_k},\qquad k=1,\dots,n-1.
\end{align}
\end{theorem}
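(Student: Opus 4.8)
The plan is to obtain \eqref{eq19} directly from the three–term recurrence relation \eqref{eq18} by the classical Christoffel--Darboux manipulation, and then to read off \eqref{eq20} as the diagonal limit $\zeta'\to\zeta$. First I would abbreviate \eqref{eq18} as
\[
\mathpzc{P}_{k+1}^n(\zeta)=(\zeta-A_k)\,\mathpzc{P}_k^n(\zeta)-B_k\,\mathpzc{P}_{k-1}^n(\zeta),\qquad A_k:=k(k-1)a_{3,1}+ka_{2,1},\quad B_k:=k(k-n-1)a_{2,0}\big((k-1)a_{3,2}+a_{2,2}\big),
\]
and record the single identity that makes everything work: writing $h_k:=k!\,(a_{2,0}a_{3,2})^k\big(\tfrac{a_{2,2}}{a_{3,2}}\big)_k(-n)_k$ for the common denominator appearing on both sides of \eqref{eq19}--\eqref{eq20}, one has $h_k/h_{k-1}=B_k$. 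This is checked by splitting the ratio into the consecutive-Pochhammer factors $\big(\tfrac{a_{2,2}}{a_{3,2}}\big)_k/\big(\tfrac{a_{2,2}}{a_{3,2}}\big)_{k-1}=\tfrac{a_{2,2}}{a_{3,2}}+k-1$ and $(-n)_k/(-n)_{k-1}=k-n-1$, together with $a_{3,2}\big(\tfrac{a_{2,2}}{a_{3,2}}+k-1\big)=(k-1)a_{3,2}+a_{2,2}$.

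Next I would write the displayed recurrence twice, once with argument $\zeta$ and once with $\zeta'$, multiply the $\zeta$-version by $\mathpzc{P}_k^n(\zeta')$ and the $\zeta'$-version by $\mathpzc{P}_k^n(\zeta)$, and subtract. Setting $W_k:=\mathpzc{P}_{k+1}^n(\zeta)\mathpzc{P}_k^n(\zeta')-\mathpzc{P}_k^n(\zeta)\mathpzc{P}_{k+1}^n(\zeta')$, the terms containing $A_k$ cancel and the terms containing $B_k$ reorganise into $B_kW_{k-1}$, giving $W_k=(\zeta-\zeta')\,\mathpzc{P}_k^n(\zeta)\mathpzc{P}_k^n(\zeta')+B_kW_{k-1}$. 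Dividing by $h_k(\zeta-\zeta')$ and using $B_k/h_k=1/h_{k-1}$ turns this into the telescoping relation
\[
\frac{W_k}{h_k(\zeta-\zeta')}-\frac{W_{k-1}}{h_{k-1}(\zeta-\zeta')}=\frac{\mathpzc{P}_k^n(\zeta)\mathpzc{P}_k^n(\zeta')}{h_k}.
\]
Summing over $j$ and using $\mathpzc{P}_0^n\equiv1$, $h_0=1$, and $W_0=\mathpzc{P}_1^n(\zeta)-\mathpzc{P}_1^n(\zeta')=\zeta-\zeta'$ to fix the $j=0$ term yields \eqref{eq19}; equivalently one can present the same computation as a one-line induction on $k$.

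Finally I would let $\zeta'\to\zeta$ in \eqref{eq19}. The left side converges termwise to $\sum_{j=0}^k\big(\mathpzc{P}_j^n(\zeta)\big)^2/h_j$, and on the right the numerator $W_k$ vanishes at $\zeta'=\zeta$, so differentiating numerator and denominator with respect to $\zeta'$ (l'H\^opital's rule, the $\mathpzc{P}_j^n$ being polynomials, hence differentiable, in that variable) replaces $W_k/(\zeta-\zeta')$ by $[\mathpzc{P}_{k+1}^n(\zeta)]'\mathpzc{P}_k^n(\zeta)-[\mathpzc{P}_k^n(\zeta)]'\mathpzc{P}_{k+1}^n(\zeta)$, which is exactly the numerator in \eqref{eq20}. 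The argument has no conceptual obstacle; the one computation that must be carried out carefully is the identity $B_k=h_k/h_{k-1}$, i.e. matching the recurrence coefficient in \eqref{eq18} to the ratio of normalising constants in \eqref{eq19}. One should also note that the standing assumptions of Section III ($a_{3,2}\neq0$, $a_{2,2}\neq0$, and $a_{2,0}\neq0$ since $\tau_{1,0}=na_{2,0}$ with $n\ge1$) guarantee $h_k\neq0$ for $0\le k\le n-1$---indeed $(-n)_k$ runs over $-n,\dots,-n+k-1$, all nonzero in that range---so no division by zero occurs and the telescoping is legitimate.
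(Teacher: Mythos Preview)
Your proposal is correct and follows essentially the same route as the paper: write the recurrence \eqref{eq18} at $\zeta$ and $\zeta'$, cross-multiply and subtract to obtain a Wronskian-type identity, telescope using the product $\prod_{j=1}^k B_j=h_k$, and take the limit $\zeta'\to\zeta$ for the confluent form. The only cosmetic difference is that the paper carries the accumulated product $\lambda_{k+1}\lambda_k\cdots\lambda_2$ explicitly and identifies it with $h_k$ at the end, whereas you verify $h_k/h_{k-1}=B_k$ at the outset and telescope directly; your added remark on the nonvanishing of $h_k$ for $0\le k\le n-1$ is a welcome point the paper leaves implicit.
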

\begin{proof} The recurrence relations \eqref{eq18} for $\zeta$ and $\zeta'$ reads, respectively,
\begin{align*}
\mathpzc{P}_{k+1}^n(\zeta)&=\zeta\mathpzc{P}_k^n(\zeta)-[k(k-1)a_{3,1}+ka_{2,1}]\mathpzc{P}_k^n(\zeta)-k\,(k-n-1) a_{2,0}((k-1)a_{3,2}+a_{2,2})\mathpzc{P}_{k-1}^n(\zeta),\\
\mathpzc{P}_{k+1}^n(\zeta')&=\zeta'\mathpzc{P}_k^n(\zeta')-[k(k-1)a_{3,1}+ka_{2,1}]\mathpzc{P}_k^n
(\zeta')-k\,(k-n-1) a_{2,0}((k-1)a_{3,2}+a_{2,2})\mathpzc{P}_{k-1}^n(\zeta').
\end{align*}
Multiplying the first by $\mathpzc{P}_{k}^n(\zeta')$ and the second by $\mathpzc{P}_{k}^n(\zeta)$ then subtruct, the resulting equation reads
\begin{align*}
(\zeta-\zeta')\mathpzc{P}_{k}^n(\zeta')
\mathpzc{P}_k^n(\zeta)=Q_{k+1}(\zeta,\zeta')-\lambda_{k+1}Q_{k}(\zeta,\zeta')
\end{align*}
where $\lambda_{k+1}=k\,(k-n-1) a_{2,0}((k-1)a_{3,2}+a_{2,2}),$ and  $Q_{k+1}(\zeta,\zeta')=\mathpzc{P}_{k+1}^n(\zeta)\mathpzc{P}_{k}^n(\zeta')-\mathpzc{P}_{k}^n(\zeta)
\mathpzc{P}_{k+1}^n(\zeta')$. Thus, recursively over $k$ we have
\begin{align*}
(\zeta-\zeta')\mathpzc{P}_{k}^n(\zeta)\mathpzc{P}_k^n(\zeta')&=Q_{k+1}(\zeta,\zeta')-\lambda_{k+1}
Q_{k}(\zeta,\zeta')\\
(\zeta-\zeta')\mathpzc{P}_{k-1}^n(\zeta)\mathpzc{P}_{k-1}^n(\zeta')&=Q_{k}(\zeta,\zeta')-\lambda_{k}
Q_{k-1}(\zeta,\zeta')\\
\dots&=\dots\\
(\zeta-\zeta')\mathpzc{P}_0^n(\zeta)\mathpzc{P}_0^n(\zeta')&=Q_1(\zeta,\zeta').
\end{align*}
From which, it is straightforward to obtain
\begin{align*}
(\zeta-\zeta')&\bigg[\mathpzc{P}_{k}^n(\zeta)\mathpzc{P}_k^n(\zeta')+
\lambda_{k+1}\mathpzc{P}_{k-1}^n(\zeta)\mathpzc{P}_{k-1}^n(\zeta')+
\lambda_{k+1}\lambda_{k}\mathpzc{P}_{k-2}^n(\zeta)\mathpzc{P}_{k-2}^n(\zeta')+
\lambda_{k+1}\lambda_{k}\lambda_{k-1}\mathpzc{P}_{k-3}^n(\zeta)\mathpzc{P}_{k-3}^n(\zeta')
+\dots\notag\\
&+
\lambda_{k+1}\lambda_{k}\lambda_{k-1}\lambda_{k-2}\dots \lambda_2\mathpzc{P}_0^n(\zeta)\mathpzc{P}_0^n(\zeta')\bigg]=Q_{k+1}(\zeta,\zeta')
\end{align*}
Divide  both sides by $(\zeta-\zeta')\lambda_{k+1}\lambda_{k}\lambda_{k-1}\lambda_{k-2}\dots \lambda_2$ and sum over $k$ we finally gets
\begin{align*}
\sum_{j=0}^k\frac{\mathpzc{P}_{j}^n(\zeta)\mathpzc{P}_j^n(\zeta')}{\lambda_{j+1}\lambda_{j}\lambda_{j-1}\dots \lambda_2}
&=(\lambda_{k+1}\lambda_{k}\lambda_{k-1}\dots\lambda_2 )^{-1}\frac{\mathpzc{P}_{k+1}^n(\zeta)\mathpzc{P}_{k}^n(\zeta')-\mathpzc{P}_{k}^n(\zeta)
\mathpzc{P}_{k+1}^n(\zeta')}{\zeta-\zeta'}.
\end{align*}
Formula \eqref{eq17} then follows using $$\lambda_{k+1}\lambda_{k}\lambda_{k-1}\dots\lambda_2=\prod_{j=2}^{k+1}\lambda_j=k!\,
(a_{2,0} a_{3,2})^k\left(\frac{a_{2,2}}{a_{3,2}}\right)_k(-n)_k,\qquad k=1,\dots, n,$$
and finally the formula \eqref{eq18} follows by evaluating the limit of both sides of \eqref{eq17} as $\zeta'\rightarrow \zeta$.
\end{proof}
\begin{theorem}
The zeros $\{\zeta_j\}_{j=0}^k$ of the orthogonal polynomial $\mathpzc{P}_{k+1}^n(\zeta), k=0,\dots,n-1$ are 
eigenvalues of the following tridiagonal matrix
\begin{equation}\label{eq21}
\begin{pmatrix}
\alpha_0&1& & & &\\
\beta_1&\alpha_1&1& & & \\
 &\beta_2&\alpha_2&1& & \\
& & \ddots&\ddots &\ddots & \\
& & &   \beta_{k-1}&\alpha_{k-1}& 1\\
& & &  &\beta_k& \alpha_k
\end{pmatrix}
\end{equation}
where
$\alpha_j=j(j-1)a_{3,1}+ja_{2,1},$ for $ j\geq 0$ and $\beta_j=j\,(j-n-1) a_{2,0}((j-1)a_{3,2}+a_{2,2}),$ for $ j\geq 1.$
\end{theorem}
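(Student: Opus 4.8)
The plan is to recognize the three-term recurrence \eqref{eq18} as an eigenvalue equation for the matrix \eqref{eq21}. First I would introduce the abbreviations $\alpha_j=j(j-1)a_{3,1}+ja_{2,1}$ and $\beta_j=j\,(j-n-1)a_{2,0}((j-1)a_{3,2}+a_{2,2})$ and rewrite \eqref{eq18} in the homogeneous form
\begin{equation*}
\mathpzc{P}_{j+1}^n(\zeta)+\alpha_j\,\mathpzc{P}_j^n(\zeta)+\beta_j\,\mathpzc{P}_{j-1}^n(\zeta)=\zeta\,\mathpzc{P}_j^n(\zeta),\qquad j=0,1,\dots,k,
\end{equation*}
which holds for every $\zeta$, with the conventions $\mathpzc{P}_{-1}^n=0$, $\mathpzc{P}_0^n=1$ (note $\alpha_0=0$, so the $j=0$ line reads $\mathpzc{P}_1^n(\zeta)=\zeta$, consistent with the listed values).

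Next I would form the column vector $\mathbf v(\zeta)=\big(\mathpzc{P}_0^n(\zeta),\mathpzc{P}_1^n(\zeta),\dots,\mathpzc{P}_k^n(\zeta)\big)^{\mathrm T}$ and let $J$ be the $(k+1)\times(k+1)$ tridiagonal matrix in \eqref{eq21}, whose $j$-th row has $\beta_j$ on the subdiagonal, $\alpha_j$ on the diagonal, and $1$ on the superdiagonal. Computing $J\mathbf v(\zeta)$ row by row, rows $j=0,\dots,k-1$ reproduce exactly the recurrence above, so $(J\mathbf v(\zeta))_j=\zeta\,\mathpzc{P}_j^n(\zeta)$ there; in the last row the superdiagonal entry is absent, hence $(J\mathbf v(\zeta))_k=\alpha_k\mathpzc{P}_k^n(\zeta)+\beta_k\mathpzc{P}_{k-1}^n(\zeta)=\zeta\,\mathpzc{P}_k^n(\zeta)-\mathpzc{P}_{k+1}^n(\zeta)$. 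In compact form,
\begin{equation*}
J\,\mathbf v(\zeta)=\zeta\,\mathbf v(\zeta)-\mathpzc{P}_{k+1}^n(\zeta)\,\mathbf e_k,
\end{equation*}
with $\mathbf e_k$ the last standard basis vector. Specializing to a zero $\zeta=\zeta_j$ of $\mathpzc{P}_{k+1}^n$ annihilates the inhomogeneous term, giving $J\,\mathbf v(\zeta_j)=\zeta_j\,\mathbf v(\zeta_j)$; since the leading entry $\mathpzc{P}_0^n(\zeta_j)=1\neq0$ the vector $\mathbf v(\zeta_j)$ is nonzero, so $\zeta_j$ is an eigenvalue of $J$.

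Finally, to show that this accounts for the entire spectrum, I would verify that $\det(\zeta I-J)=\mathpzc{P}_{k+1}^n(\zeta)$: expanding $\zeta I-J$ along its bottom row and applying \eqref{eq18} shows, by induction on the size, that the leading principal minors of $\zeta I-J$ are precisely $\mathpzc{P}_1^n(\zeta),\dots,\mathpzc{P}_{k+1}^n(\zeta)$; alternatively, since $\{\mathpzc{P}_j^n\}$ is an orthogonal polynomial sequence its zeros are real and simple, so $\mathpzc{P}_{k+1}^n$ has exactly $k+1$ distinct zeros, matching $\dim J$. Either way the correspondence is exact. I expect no genuine obstacle here; the only care needed is the sign and index bookkeeping in the cofactor expansion, so that the product of subdiagonal entries appearing there matches the coefficient $\beta_k$ of \eqref{eq18}. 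This is the standard "Jacobi matrix of an orthogonal polynomial family" argument.
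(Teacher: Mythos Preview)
Your argument is correct and follows essentially the same route as the paper: rewrite the recurrence \eqref{eq18} as $\zeta\,\mathpzc{P}_j^n=\beta_j\mathpzc{P}_{j-1}^n+\alpha_j\mathpzc{P}_j^n+\mathpzc{P}_{j+1}^n$, stack the equations for $j=0,\dots,k$ into a matrix identity, and specialize to a root of $\mathpzc{P}_{k+1}^n$. Your version is in fact slightly more careful than the paper's (you make the inhomogeneous term $\mathpzc{P}_{k+1}^n(\zeta)\,\mathbf e_k$ explicit, note that $\mathbf v(\zeta_j)\neq 0$, and add the determinant/simplicity remark showing the spectrum is exactly the zero set), but the core idea is identical.
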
 
\begin{proof}
We first write the recurrence relation \eqref{eq18} as
\begin{equation}\label{eq22}
\zeta\mathpzc{P}_{j}^n(\zeta)=\beta_j\mathpzc{P}_{j-1}^n(\zeta) +\alpha_j\mathpzc{P}_j^n(\zeta)+\mathpzc{P}_{j+1}^n(\zeta),\qquad j\geq 0,
\end{equation}
We now take $j=0,1,2,\dots k$ to form a system with a matrix form
\begin{equation}\label{eq23}
\begin{pmatrix}
\zeta \mathpzc{P}_{0}^n(\zeta)\\
\zeta \mathpzc{P}_{1}^n(\zeta)\\
\zeta \mathpzc{P}_{2}^n(\zeta)\\
\vdots\\
\zeta \mathpzc{P}_{k-1}^n(\zeta)\\
\zeta \mathpzc{P}_{k}^n(\zeta)
\end{pmatrix}=
\begin{pmatrix}
\alpha_0&1& & & &\\
\beta_1&\alpha_1&1& & & \\
 &\beta_2&\alpha_2&1& & \\
& & \ddots&\ddots &\ddots & \\
& & &   \beta_{k-1}&\alpha_{k-1}& 1\\
& & &  &\beta_k& \alpha_k
\end{pmatrix}\begin{pmatrix}
\mathpzc{P}_{0}^n(\zeta)\\
\mathpzc{P}_{1}^n(\zeta)\\
\mathpzc{P}_{2}^n(\zeta)\\
\vdots\\
\mathpzc{P}_{k-1}^n(\zeta)\\
\mathpzc{P}_{k}^n(\zeta)
\end{pmatrix}
\end{equation}
The system at the zeros $\zeta=\zeta_j$ becomes
\begin{equation}\label{eq24}
\begin{pmatrix}
\alpha_0-\zeta_j&1& & & &\\
\beta_1&\alpha_1-\zeta_j&1& & & \\
 &\beta_2&\alpha_2-\zeta_j&1& & \\
& & \ddots&\ddots & & \\
& & &   \beta_{k-1}&\alpha_{k-1}-\zeta_j& 1\\
& & &  &\beta_k& \alpha_k-\zeta_j
\end{pmatrix}\begin{pmatrix}
\mathpzc{P}_{0}^n(\zeta_j)\\
\mathpzc{P}_{1}^n(\zeta_j)\\
\mathpzc{P}_{2}^n(\zeta_j)\\
\vdots\\
\mathpzc{P}_{k-1}^n(\zeta_j)\\
\mathpzc{P}_{k}^n(\zeta_j),
\end{pmatrix}=0,\quad 0\leq j\leq k.
\end{equation}
Hence, $\{\zeta_j\}_{j=0}^k$ are eigenvalues of the tridiagonal matrix \eqref{eq19}.
\end{proof}
\subsection{Factorization Property}
\noindent Another interesting property of the polynomials 
$\{\mathpzc{P}_k^n(\zeta)\}_{k=0}^n$, next to being an orthogonal sequence, is that when the
parameter $n$
takes positive integer values the polynomials exhibit a
factorization property similar to the Bender-Dunne orthogonal polynomials \cite{bender1996, saad2006,agn}. 
\vskip0.1true in
\noindent Clearly, The factorization occurs because the third-term
in the recursion relation \eqref{eq18}
vanishes when $k=n+1$, so that all subsequent polynomials have a common factor  $\mathpzc{P}_{n+1}^n(\zeta)$ called a \emph{critical polynomial}. To illustrate this factorization property, we examine it in the case of $n=1$:
\begin{align*}
\mathpzc{P}_1^1(\zeta)&=\zeta,\\
\mathpzc{P}_2^1(\zeta)&=\zeta^2-a_{2,1}\zeta +a_{2,0} a_{2,2},\\
\mathpzc{P}_3^1(\zeta)&=(\zeta-2 a_{2,1}-2a_{3,1}) \mathpzc{P}_2^1(\zeta),\\
\mathpzc{P}_4^1(\zeta)&=\left(6 a_{2,1}^2-3a_{2,0} a_{2,2}+18a_{2,1} a_{3,1}+12 a_{3,1}^2-6 a_{2,0}a_{3,2}-5 a_{2,1}\zeta-8 a_{3,1}\zeta+\zeta^2\right)\mathpzc{P}_2^1(\zeta).
\end{align*}
Here, the critical polynomial $P_2^1(\zeta)$ factorize every other polynomial $P_{k+2}^n(\zeta), k=1,2,\dots$. 
As another illustration, we note for $n=2$: 
\begin{align*}
\mathpzc{P}_1^2(\zeta)&=\zeta,\\
\mathpzc{P}_2^2(\zeta)&=\zeta^2-a_{2,1}\zeta +2a_{2,0} a_{2,2},\\
\mathpzc{P}_3^2(\zeta)&=-4 a_{2,0} a_{2,1}a_{2,2}-4a_{2,0}a_{2,2}a_{3,1}+\left(2a_{2,1}^2+4a_{2,0} a_{2,2}+2 a_{2,1}a_{3,1}+2a_{2,0}a_{3,2}\right)\zeta-(3 a_{2,1}+2a_{3,1}) \zeta^2+\zeta^3,\\
\mathpzc{P}_4^2(\zeta)&=(\zeta-3a_{2,1}-6a_{3,1})\mathpzc{P}_3^2(\zeta),\\
\mathpzc{P}_5^2(\zeta)&=(12 a_{2,1}^2 - 4 a_{2,0} a_{2,2} + 60 a_{2,1} a_{3,1} + 72 a_{3,1}^2 - 
 12 a_{2,0} a_{3,2} - (7 a_{2,1}+18 a_{3,1})\zeta + \zeta^2) \mathpzc{P}_3^2(\zeta).
\end{align*}
with $P_3^n(\zeta)$ being a critical polynomial factorize every other polynomial $P_{k+3}^n(\zeta), k=1,2,\dots$.  Indeed, all the polynomials
$\mathpzc{P}_{k+n+1}^n(\zeta)$, beyond some critical polynomial $\mathpzc{P}_{n+1}^n(\zeta)$, factored into the product 
\begin{align}\label{eq25}
\mathpzc{P}_{k+n+1}^n(\zeta)=\mathcal Q_k^n(\zeta)\mathpzc{P}_{n+1}^n(\zeta),\qquad k=0,1,\dots.
\end{align}
Interestingly, the quotient polynomials $\{\mathcal Q_k^n(\zeta)\}_{k\geq 0}$  form an infinite sequence of orthogonal polynomials. To prove this claim, we substitute \eqref{eq25} into \eqref{eq18} where we re-index the polynomials to eliminate the common factor $\mathpzc{P}_{n+1}^n(\zeta)$ from both sides. The recurrence relation \eqref{eq18} then reduces to a three-term recurrence relation for the polynomials $\{\mathcal Q_k^n(\zeta)\}_{k\geq 0}$ that reads
\begin{equation}\label{eq26}
\mathcal Q_k^n(\zeta)=\left(\zeta- (k+n)((k+n-1)a_{3,1}+a_{2,1})\right)\mathcal Q_{k-1}^n(\zeta)-\,(k-1)(k+n) a_{2,0}((k+n-1)a_{3,2}+a_{2,2})\mathcal Q_{k-2}^{n}(\zeta),\quad k\geq 1,
\end{equation}
initiated with $\mathcal Q_{-1}^n(\zeta)=0,$ and $\mathcal Q_{0}^n(\zeta)=1$. Hence, the quotient polynomials $\mathcal Q_k^n(\zeta)$
also form a new sequence of orthogonal polynomials for each value of $n$. The first few polynomials of $\mathcal Q_k^n(\zeta)$ can to explicitly evaluate to reads
\begin{align*}
\mathcal Q_1^n(\zeta)&=\zeta-(n+1)(a_{2,1}+na_{3,1}),\\
\mathcal Q_2^n(\zeta)&=\zeta^2-(2a_{3,1}(1+n)^2+a_{2,1}(2 n+3))\zeta+(n+2)\left[(1+n) (a_{2,1}+a_{3,1} n) (a_{2,1}+a_{3,1}+a_{3,1} n)-a_{2,0} (a_{2,2}+a_{3,2}+a_{3,2} n)\right].
\end{align*}
\vskip0.1true in
\noindent The Christoffel-Darboux formula for this sequence of orthogonal polynomials reads
\begin{align}\label{eq27}
\sum_{j=0}^k\frac{{\mathcal Q}_j^n(\zeta){\mathcal Q}_j^n(\zeta')}{j!(n+2)_j(a_{2,0}a_{3,2})^j\left(n+1+\frac{a_{2,2}}{a_{3,2}}\right)_j}=\frac{{\mathcal Q}_{k+1}^n(\zeta){\mathcal Q}_{k}^n(\zeta')-{\mathcal Q}_{k}^{n}(\zeta){\mathcal Q}_{k+1}^n(\zeta')}{k!(n+2)_k(a_{2,0}a_{3,2})^k\left(n+1+\frac{a_{2,2}}{a_{3,2}}\right)_k(\zeta-\zeta')},
\end{align}
and as $\zeta'\rightarrow \zeta'$,
\begin{align}\label{eq28}
\sum_{j=0}^k\frac{\left({\mathcal Q}_j^n(\zeta)\right)^2}{j!(n+2)_j(a_{2,0}a_{3,2})^j\left(n+1+\frac{a_{2,2}}{a_{3,2}}\right)_j}=\frac{[{\mathcal Q}_{k+1}^n(\zeta)]'{\mathcal Q}_{k}^n(\zeta)-[{\mathcal Q}_{k}^{n}(\zeta)]'{\mathcal Q}_{k+1}^n(\zeta)}{k!(n+2)_k(a_{2,0}a_{3,2})^k\left(n+1+\frac{a_{2,2}}{a_{3,2}}\right)_k}.
\end{align}

\subsection{Moments and weight functions}
\noindent From the general theory of orthogonal polynomials,  there exists a
certain weight function $\mathpzc{W}(\xi)$ normalized as 
\begin{equation}\label{eq29}
\int_{\mathpzc{S}} \mathpzc{W}(\xi)d\xi=1,
\end{equation}
for which
\begin{equation}\label{eq30}
\int_{\mathcal S}  \mathpzc{P}_{k}^n(\zeta)\,\mathpzc{P}_{k'}^n(\zeta)\,\mathpzc{W}(\zeta)\,d\zeta=p_kp_{k'}\delta_{kk'},\qquad k,k'=0,1,\dots,n,
\end{equation}
where $\mathpzc{S}$ is the support of the measure $\mathpzc{W}(\zeta)\,d\zeta$ on the real line and $\delta_{kk'}$ is Kronecker's symbol.  
\begin{theorem} \label{Thm3.2} For the monic orthogonal polynomials $\{ \mathpzc{P}_{k}^n(\zeta) \}_{k=0}^n$
\begin{equation}\label{eq31}
\int_{\mathcal S}  \zeta^{j} \mathpzc{P}_{k}^n(\zeta)\,\mathpzc{W}(\zeta)\,d\zeta= p_k^2\,\delta_{kj},\qquad\forall j\leq k.
\end{equation}
\end{theorem}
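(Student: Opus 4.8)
The plan is to invoke the standard "triangularity" argument for orthogonal polynomials: a monic orthogonal polynomial of degree $k$ is orthogonal, with respect to $\mathpzc{W}$, to every polynomial of degree strictly less than $k$, and its monic normalization then fixes the diagonal value. The only point requiring care is that $\{\mathpzc{P}_k^n\}_{k=0}^n$ is a \emph{finite} orthogonal sequence, so relation \eqref{eq30} is only guaranteed for indices not exceeding $n$; since in \eqref{eq31} we have $j\le k\le n$, every index occurring in the argument stays in this range, so no difficulty arises. (The existence of $\mathpzc{W}$ and of the normalization constants $p_k$ is supplied by Favard's theorem applied to the recurrence \eqref{eq18}, as already noted.)

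First I would record that, for each $m$ with $0\le m\le n$, the polynomials $\mathpzc{P}_0^n,\mathpzc{P}_1^n,\dots,\mathpzc{P}_m^n$ are linearly independent — each $\mathpzc{P}_i^n$ has exact degree $i$, being monic — hence they form a basis of the space of polynomials of degree at most $m$. In particular, for $j<k$ one may write $\zeta^j=\sum_{i=0}^j c_i\,\mathpzc{P}_i^n(\zeta)$ with suitable constants $c_i$. Multiplying by $\mathpzc{P}_k^n(\zeta)\,\mathpzc{W}(\zeta)$ and integrating over $\mathcal S$, relation \eqref{eq30} gives $\int_{\mathcal S}\mathpzc{P}_i^n\,\mathpzc{P}_k^n\,\mathpzc{W}\,d\zeta=0$ for every $i\le j<k\le n$, whence $\int_{\mathcal S}\zeta^j\,\mathpzc{P}_k^n(\zeta)\,\mathpzc{W}(\zeta)\,d\zeta=0$. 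This establishes \eqref{eq31} for all $j<k$, i.e. it produces the factor $\delta_{kj}$.

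For the remaining case $j=k$, I would use that $\mathpzc{P}_k^n$ is monic: writing $\zeta^k=\mathpzc{P}_k^n(\zeta)+q(\zeta)$ with $\deg q\le k-1$, one gets
\begin{align*}
\int_{\mathcal S}\zeta^k\,\mathpzc{P}_k^n(\zeta)\,\mathpzc{W}(\zeta)\,d\zeta
=\int_{\mathcal S}\big(\mathpzc{P}_k^n(\zeta)\big)^2\,\mathpzc{W}(\zeta)\,d\zeta
+\int_{\mathcal S}q(\zeta)\,\mathpzc{P}_k^n(\zeta)\,\mathpzc{W}(\zeta)\,d\zeta .
\end{align*}
The first integral equals $p_k^2$ by \eqref{eq30} with $k'=k$, and the second vanishes by the case already treated (expand $q$ in $\mathpzc{P}_0^n,\dots,\mathpzc{P}_{k-1}^n$). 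Hence $\int_{\mathcal S}\zeta^k\,\mathpzc{P}_k^n\,\mathpzc{W}\,d\zeta=p_k^2$, which completes the proof.

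I do not expect a genuine obstacle here; the content is the familiar fact that the Gram matrix of a monic orthogonal family against the monomial basis is lower triangular with the squared norms on the diagonal. The one thing worth flagging explicitly is the finiteness of the sequence: the argument never invokes \eqref{eq30} for an index larger than $n$, so the hypothesis $j\le k\le n$ is precisely what makes every step legitimate.
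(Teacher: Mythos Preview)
Your proof is correct and rests on the same triangular change-of-basis idea as the paper's. The paper packages it by expanding each $\mathpzc{P}_{k'}^n$ in monomials and inverting the resulting lower unitriangular system, whereas you expand the monomials in the $\mathpzc{P}_i^n$ directly --- a slightly more streamlined presentation of the same argument.
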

\begin{proof}
By writing $p_{k'}^n(\zeta)=\sum_{j=0}^{k'} \alpha_{k';j}\zeta^j$, where $\alpha_{k';k'}=1$, equation \eqref{eq22} yields
\begin{equation}\label{eq32}
\sum_{j=0}^{k'}\alpha_{k';j} \int_{\mathcal S} \zeta^j \mathpzc{P}_{k}^n(\zeta)\,\mathpzc{W}(\zeta)\,d\zeta=p_kp_{k'}\delta_{kk'},\qquad k,k'=0,1,\dots,n,\qquad a_{k';k'}=1.
\end{equation}
For $k'=0,1,\dots,n$, equation \eqref{eq32} generate a linear system that reduce to
\begin{align}\label{eq33}
\begin{pmatrix}
\int_{\mathcal S}  \mathpzc{P}_{k}^n(\zeta)\mathpzc{W}(\zeta)d\zeta\\ \\
\int_{\mathcal S}  \zeta\mathpzc{P}_{k}^n(\zeta)\mathpzc{W}(\zeta)d\zeta\\ \\
\int_{\mathcal S}  \zeta^2\mathpzc{P}_{k}^n(\zeta)\mathpzc{W}(\zeta)d\zeta\\ \\
\int_{\mathcal S}  \zeta^3\mathpzc{P}_{k}^n(\zeta)\mathpzc{W}(\zeta)d\zeta\\
\vdots\\
\int_{\mathpzc{S}}  \zeta^{k'}\mathpzc{P}_{k}^n(\zeta)\mathpzc{W}(\zeta)d\zeta
\end{pmatrix}=p^k\begin{pmatrix}
1&0&0&0&\dots& 0\\
\alpha_{1;0}^n& 1&0&0&\dots&0\\ \\
\alpha_{2;0}^n&\alpha_{2;1}^n&1&0&\dots&0\\ \\
\alpha_{3;0}^n&\alpha_{3;1}^n&\alpha_{3;2}^n&1&\dots&0\\ \\
\alpha_{4;0}^n&\alpha_{4;1}^n&\alpha_{4,2}^n&\alpha_{4,3}^n&\dots&0\\ \\
\vdots&\vdots&\vdots&\vdots&\dots&\vdots\\ \\
\alpha_{k';0}^n&\alpha_{k';1}^n&\alpha_{k';2}^n&\alpha_{k';3}^n&\dots&1
\end{pmatrix}_{(k'+1)\times (k'+1)}^{-1}\times\begin{pmatrix}
p_0\delta_{k0}\\ \\
p_1\delta_{k1}\\ \\
p_2\delta_{k2}\\ \\
p_3\delta_{k3}\\ \\
\vdots\\
p_{k'}\delta_{kk'}
\end{pmatrix}_{(k'+1)\times 1}.
\end{align}
The inverse of the lower triangular matrix is again lower triangular matrix with ones on the main diagonal. This argument concludes our assertion \eqref{eq27}. . 
\end{proof}
\vskip0.1true in
\noindent From Theorem \ref{Thm3.2}, we also conclude that
\begin{equation}\label{eq34}
\int_{\mathcal S}  \mathpzc{W}(\zeta)\,d\zeta= 1, \quad \int_{\mathcal S}  \mathpzc{P}_{k}^n(\zeta)\,\mathpzc{W}(\zeta)\,d\zeta= 0\quad (k\geq 1), \qquad \int_{\mathcal S}  \mathpzc{P}_{j}^n(\zeta)\mathpzc{P}_{k}^n(\zeta)\,\mathpzc{W}(\zeta)\,d\zeta= 0,\quad (0\leq j<k\leq n).
\end{equation}
\begin{theorem}\label{Thm3.4}
The norms of all polynomials
$\mathpzc{P}_{k}^n(\xi)$ with $k\geq n+1$ vanish.
\end{theorem}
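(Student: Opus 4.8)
The plan is to write the squared norm $p_k^2=\int_{\mathcal S}\bigl(\mathpzc{P}_k^n(\zeta)\bigr)^2\mathpzc{W}(\zeta)\,d\zeta$ as a product of the recurrence coefficients of \eqref{eq18} and then to notice that one of those coefficients is identically zero at level $n+1$. Put \eqref{eq18} in monic form
\[\mathpzc{P}_{k+1}^n(\zeta)=(\zeta-c_k)\,\mathpzc{P}_k^n(\zeta)-\lambda_k\,\mathpzc{P}_{k-1}^n(\zeta),\qquad c_k=k(k-1)a_{3,1}+ka_{2,1},\quad \lambda_k=k\,(k-n-1)\,a_{2,0}\bigl((k-1)a_{3,2}+a_{2,2}\bigr).\]
First I would establish the standard one-step identity $p_k^2=\lambda_k\,p_{k-1}^2$: multiply the recurrence for $\mathpzc{P}_{k}^n$ by $\mathpzc{P}_{k-1}^n(\zeta)\mathpzc{W}(\zeta)$, integrate over $\mathcal S$, shift the factor $\zeta$ onto $\mathpzc{P}_k^n$ via $\zeta\mathpzc{P}_{k-1}^n=\mathpzc{P}_k^n+c_{k-1}\mathpzc{P}_{k-1}^n+\lambda_{k-1}\mathpzc{P}_{k-2}^n$, and drop all cross terms using the orthogonality relations \eqref{eq30}--\eqref{eq34}. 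Iterating from $p_0^2=\int_{\mathcal S}\mathpzc{W}\,d\zeta=1$ (the normalization \eqref{eq29}) gives
\[p_k^2=\prod_{j=1}^{k}\lambda_j=k!\,(a_{2,0}a_{3,2})^k\Bigl(\tfrac{a_{2,2}}{a_{3,2}}\Bigr)_k(-n)_k,\]
which is exactly the product already evaluated in the proof of the Christoffel--Darboux formula \eqref{eq19}.

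The conclusion is then immediate. The factor $(k-n-1)$ in $\lambda_k$ vanishes at $k=n+1$, so $\lambda_{n+1}=0$ and hence $p_{n+1}^2=\lambda_{n+1}\,p_n^2=0$; equivalently $(-n)_k=0$ for every $k\ge n+1$, so the product $\prod_{j=1}^k\lambda_j$ carries the zero factor $\lambda_{n+1}$ for all such $k$ and $p_k^2=0$ throughout. This is the same mechanism that produces the factorization \eqref{eq25}: the vanishing of the third term of \eqref{eq18} at $k=n+1$ simultaneously splits off the critical polynomial $\mathpzc{P}_{n+1}^n(\zeta)$ and annihilates every norm at or above that level.

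The point that needs care is that \eqref{eq34} records the orthogonality relations only for indices $\le n$, whereas the one-step identity $p_k^2=\lambda_kp_{k-1}^2$ at $k=n+1$ (and above) uses cross terms such as $\int_{\mathcal S}\mathpzc{P}_{n-1}^n\mathpzc{P}_{n+1}^n\mathpzc{W}\,d\zeta$. I would close this gap by noting that the moment functional attached to the recurrence \eqref{eq18} makes the \emph{entire} sequence $\{\mathpzc{P}_k^n\}_{k\ge 0}$ orthogonal in the sense $\int_{\mathcal S}\mathpzc{P}_j^n\mathpzc{P}_k^n\mathpzc{W}\,d\zeta=0$ for $j\neq k$ --- Favard's construction requires no positive-definiteness --- and that this is consistent with the factorization \eqref{eq25}, since once $\mathpzc{P}_{n+1}^n$ is orthogonal to all polynomials of degree $\le n+1$ (forcing $p_{n+1}^2=0$ directly), the divisibility $\mathpzc{P}_{k+n+1}^n=\mathcal Q_k^n\,\mathpzc{P}_{n+1}^n$ propagates the vanishing of the norms to every higher index by an easy induction on $k$ through $p_k^2=\lambda_k p_{k-1}^2$. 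Apart from this bookkeeping the argument is the routine monic‑orthogonal‑polynomial computation and presents no genuine obstacle; the only substantive input is the elementary observation that $\lambda_{n+1}=0$.
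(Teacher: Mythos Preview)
Your proposal is correct and follows essentially the same route as the paper: both derive the two--term recursion $p_k^2=\lambda_k\,p_{k-1}^2$ with $\lambda_k=k(k-n-1)a_{2,0}\bigl((k-1)a_{3,2}+a_{2,2}\bigr)$, solve it to obtain $p_k^2=k!\,(a_{2,0}a_{3,2})^k\bigl(a_{2,2}/a_{3,2}\bigr)_k(-n)_k$, and conclude from $(-n)_k=0$ for $k\ge n+1$. The only cosmetic difference is that the paper multiplies \eqref{eq18} by $\zeta^{k-1}\mathpzc{W}(\zeta)$ and invokes Theorem~\ref{Thm3.2}, whereas you multiply by $\mathpzc{P}_{k-1}^n(\zeta)\mathpzc{W}(\zeta)$ and use the orthogonality relations directly; your added remark about extending the orthogonality past $k=n$ via Favard's construction is a point the paper leaves implicit.
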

\begin{proof}
\noindent The norms $p_k$ can be evaluated using the recurrence relations \eqref{eq18} after multiplying throughout by $\zeta^{k-1}\mathpzc{W}(\zeta)$ and taking integral over $\zeta$, we obtain, because of  Theorem  \ref{Thm3.2}, a simple two-term
recursion relation for the squared norm
\begin{equation*}
\mathcal G_k=k\,(k-1-n)a_{2,0}((k-1)a_{3,2}+a_{2,2})\mathcal G_{k-1},
\end{equation*}
with a solution given by
\begin{equation}\label{eq35}
\mathcal G_k=|q_k|^2=k!\, (a_{2,0}a_{3,2})^k (-n)_k\, \left(\frac{a_{2,2}}{a_{3,2}}\right)_k,
\end{equation}
which is equal to zero for $k=n+1,n+2,\dots$.
\end{proof}
\noindent As a result of Theorem \ref{Thm3.4}, we note
\begin{align}
\int_{\mathcal S} \left[\mathpzc{P}_{k}^n(\zeta)\right]^2\,\mathpzc{W}(\zeta)\,d\zeta&=k!\, (a_{2,0}a_{3,2})^k (-n)_k\, \left(\frac{a_{2,2}}{a_{3,2}}\right)_k,\quad 0\leq k\leq n,\label{eq36}\\
\int_{\mathcal S}  \zeta\, \mathpzc{P}_{k}^n(\zeta)\mathpzc{P}_{k-1}^n(\zeta)\,\mathpzc{W}(\zeta)\,d\zeta&=k!\, (a_{2,0}a_{3,2})^k (-n)_k\, \left(\frac{a_{2,2}}{a_{3,2}}\right)_k,\qquad k=1,2,\dots,\label{eq37}\\
\int_{\mathcal S}  \zeta\, [\mathpzc{P}_{k}^n(\zeta)]^2\,\mathpzc{W}(\zeta)\,d\zeta&=k((k-1)a_{3,1}+a_{2,1})k!\, (a_{2,0}a_{3,2})^k (-n)_k\, \left(\frac{a_{2,2}}{a_{3,2}}\right)_k,\qquad k=0,1,2,\dots.\label{eq38}
\end{align}
\vskip0.1true in
\noindent Although we don't have a general formula for evaluating the moments 
\begin{align}\label{eq39}
\mu_k =\int_{\mathcal S}\zeta^k W(\zeta)d\zeta,\quad k=0,1,\dots,n,
\end{align}
it is possible to compute $\mu_k$ recursively using Theorem \ref{Thm3.2} and the recurrence relation \eqref{eq16}. It is not difficult to show that 
\begin{align}\label{eq40}
\int_{\mathcal S}W(\zeta)d\zeta=1,~\int_{\mathcal S}\zeta \,W(\zeta)d\zeta=0,~\int_{\mathcal S}\zeta^2\,W(\zeta)d\zeta= p_1^2=-n\,a_{2,0}a_{2,2},\quad a_{2,0}a_{2,2}<0, \end{align}
and for all $k\geq 2$,
\begin{equation}\label{eq41}
\int_{\mathcal S}\zeta \mathpzc{P}_k^n(\zeta) W(\zeta)d\zeta=0.
\end{equation}
For example,
\begin{align}\label{eq42}
\int_{\mathcal S}\zeta^3\,  W(\zeta)\,d\zeta=a_{2,1}p_1^2,\qquad 
\int_{\mathcal S}\zeta^4\,  W(\zeta)\, d\zeta=p_2^2+(a_{2,1}-na_{2,0}a_{2,2})p_1^2,\quad \mbox{and so forth.}
\end{align}\vskip0.1true in
\begin{theorem}
The norms of all polynomials 
$\mathpzc{Q}_{k}^n(\xi)$ reads
\begin{equation}\label{eq43}
\mathcal G_k^{\mathcal Q}=k!\,(n+2)_k\,(a_{2,0}a_{3,2})^k\left(n+1+\frac{a_{2,2}}{a_{3,2}}\right)_k.
\end{equation}
\end{theorem}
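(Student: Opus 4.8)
\noindent The plan is to mirror the proofs of Theorems~\ref{Thm3.2} and \ref{Thm3.4}, now starting from the three-term recurrence \eqref{eq26} for the quotient polynomials instead of \eqref{eq18}. First I would record that, since \eqref{eq26} is a monic three-term recurrence with real coefficients initiated by $\mathcal Q_{-1}^n(\zeta)=0$, $\mathcal Q_0^n(\zeta)=1$, Favard's theorem supplies a weight function $w^{\mathcal Q}$, normalized by $\int_{\mathcal S'} w^{\mathcal Q}(\zeta)\,d\zeta=1$ on its support $\mathcal S'$, with respect to which $\{\mathcal Q_k^n(\zeta)\}_{k\ge 0}$ is orthogonal; write $\mathcal G_k^{\mathcal Q}=\int_{\mathcal S'}\big(\mathcal Q_k^n(\zeta)\big)^2 w^{\mathcal Q}(\zeta)\,d\zeta$. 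Exactly as in Theorem~\ref{Thm3.2}, because each $\mathcal Q_k^n$ is monic of degree $k$ one has the moment identity
\begin{equation*}
\int_{\mathcal S'}\zeta^{\,j}\,\mathcal Q_k^n(\zeta)\,w^{\mathcal Q}(\zeta)\,d\zeta=\mathcal G_k^{\mathcal Q}\,\delta_{jk},\qquad 0\le j\le k,
\end{equation*}
where, for $j<k$, one expands $\zeta^{j}$ in $\mathcal Q_0^n,\dots,\mathcal Q_j^n$ and invokes orthogonality, and for $j=k$ one uses $\zeta^{k}=\mathcal Q_k^n(\zeta)+(\text{degree}<k)$.

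\noindent Next I would shift $k\mapsto k+1$ in \eqref{eq26}, multiply through by $\zeta^{k-1}w^{\mathcal Q}(\zeta)$, and integrate over $\mathcal S'$. By the moment identity the left-hand term $\int\zeta^{k-1}\mathcal Q_{k+1}^n w^{\mathcal Q}$ vanishes (its degree is $k-1<k+1$); the contribution of $\big(\zeta-(k+n+1)((k+n)a_{3,1}+a_{2,1})\big)\mathcal Q_k^n$ collapses to $\int\zeta^{k}\mathcal Q_k^n w^{\mathcal Q}=\mathcal G_k^{\mathcal Q}$, since the piece carrying the constant coefficient meets $\int\zeta^{k-1}\mathcal Q_k^n w^{\mathcal Q}=0$; and the last term contributes $k(k+n+1)a_{2,0}\big((k+n)a_{3,2}+a_{2,2}\big)\,\mathcal G_{k-1}^{\mathcal Q}$. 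Hence the squared norms obey the two-term recursion
\begin{equation*}
\mathcal G_k^{\mathcal Q}=k\,(k+n+1)\,a_{2,0}\big((k+n)a_{3,2}+a_{2,2}\big)\,\mathcal G_{k-1}^{\mathcal Q},\qquad \mathcal G_0^{\mathcal Q}=1.
\end{equation*}

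\noindent Finally I would iterate this recursion and simplify the resulting product, using $\prod_{j=1}^{k}j=k!$, $\prod_{j=1}^{k}(j+n+1)=(n+2)_k$, and $\prod_{j=1}^{k}\big((j+n)a_{3,2}+a_{2,2}\big)=a_{3,2}^{k}\big(n+1+\frac{a_{2,2}}{a_{3,2}}\big)_k$, which yields precisely \eqref{eq43}.

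\noindent There is no serious obstacle here, as the argument runs in exact parallel to Theorems~\ref{Thm3.2}--\ref{Thm3.4}; the only points needing a little care are the justification of the moment identity for the (now genuinely infinite) sequence $\{\mathcal Q_k^n\}$ --- note that, in contrast with the $\mathpzc{P}_k^n$ case where the factor $(-n)_k$ forces the norms to vanish for $k>n$, none of the factors in the recursion for $\mathcal G_k^{\mathcal Q}$ vanish, so all the norms are nonzero --- together with the elementary bookkeeping that recasts the telescoping product in terms of Pochhammer symbols.
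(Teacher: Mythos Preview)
Your proposal is correct and follows essentially the same route as the paper: multiply the three-term recurrence \eqref{eq26} by an appropriate power of $\zeta$ times the weight, integrate, use the moment identity to kill two of the three terms, and solve the resulting two-term recursion for $\mathcal G_k^{\mathcal Q}$. The only cosmetic differences are that the paper multiplies \eqref{eq26} directly by $\zeta^{k-2}W(\zeta)$ rather than shifting $k\mapsto k+1$ first and then multiplying by $\zeta^{k-1}$ (these are equivalent reindexings), and that you are more explicit both about invoking Favard's theorem to supply the weight for $\{\mathcal Q_k^n\}$ and about writing the telescoping product in Pochhammer form.
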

\begin{proof} The proof follows by multiplying the recurrence relation  \eqref{eq26} by $\zeta^{k-2}W(\zeta)$ and integrate over $\zeta$. This procedure implies the two-term recurrence relation 
 \begin{equation}\label{eq44}
\mathcal G_{k}^{\mathcal Q}=k\,(k+n+1) a_{2,0}((k+n)a_{3,2}+a_{2,2})\mathcal G_{k-1}^{\mathcal Q},\qquad k\geq 1
\end{equation}
where $\mathcal G_{k}^{\mathcal Q}=\int_{\mathcal S} |\mathcal Q_{k}^n(\zeta)|^2 W(z)dz=\int_{\mathcal S}\zeta^{k}Q_{k}^n(\zeta)d\zeta$.
with a solution given \eqref{eq43}.
\end{proof}
\section{Example: Hyperbolic potentials}
\noindent Recently, Downing \cite{downing2013} studied a class of potentials characterized  by two physical parameters $V_0$ and $d$ shaping the potential 
depth and width respectively
\begin{align}\label{eq45}
V(x)=-V_0\frac{\sinh^4(x/d)}{\cosh^6(x/d)},\quad -\infty<x<\infty.
\end{align}
In this section, we consider first a more general class of potentials given by \cite{zhang}
\begin{align}\label{eq46}
V(x)=-V_0\frac{\sinh^p(x/d)}{\cosh^q(x/d)},\quad -\infty<x<\infty, \quad V_0>0, \quad p\geq 0,\quad q>p.
\end{align}
The condition $p\geq 0$ required to avoid a singularity at $x=0$ and he condition $p<q$ is to ensure the existence of a bound-below potential support the presence of at least one bound state.  Indeed, it is well known  \cite{Chadan}: \emph{Given a non-positive potential $V\leq 0$ with 
\begin{align}\label{eq47}
\int_{-\infty}^\infty V(x)dx<0,
\end{align}
then there exists a bound state (independent of the potential depth and range) with energy $E<0$ 
for the Hamiltonian 
$H=-\frac{\hbar^2}{2m}\frac{d^2}{dx^2}+V(x).
$} For the potential \eqref{eq46}, it is not difficult to prove that
$$\int_{-\infty}^\infty V(x)dx=-\frac{\left(1+(-1)^p\right) V_0 \Gamma\left(\frac{1+p}{2}\right) \Gamma\left(\frac{1}{2} (q-p)\right)}{2\Gamma\left(\frac{1+q}{2}\right)}<0,\qquad p\geq 0,\quad p<q.
$$
The stationary one-dimensional Schr\"odinger equation for a nonrelativistic particle of mass $m$ and energy $E$ in a hyperbolic potential $V(x)$ written as follows 
\begin{align}\label{eq48}
-\frac{\hbar^2}{2m}\dfrac{d^2}{dx^2}\psi(x)-V_0\frac{\sinh^p(x/d)}{\cosh^q(x/d)}\psi(x)=E\,\psi(x),\quad V_0>0,\quad -\infty<x<\infty.
\end{align}
can be written using the substitution $z=x/d$ as 
\begin{align}\label{eq49}
-\dfrac{d^2}{dx^2}\psi(z)-d^2 U_0\frac{\sinh^p(z)}{\cosh^q(z)}\psi(z)=\varepsilon d^2\,\psi(z),\quad -\infty<z<\infty.
\end{align}
where
$U_0=2mV_0/\hbar^2$ and $\varepsilon=2mE/\hbar^2$. Upon making the change of variable 
$\eta={1}/{\cosh^2(z)}
$ such that the domain $-\infty<z<\infty$ maps to $0<\eta<1$, equation \eqref{eq49} yields
\begin{align}\label{eq50}
4\eta^2(1-\eta)\dfrac{d^2\psi(\eta)}{d\eta^2}+\left[4\eta-6\eta^2
\right]\frac{d\psi(\eta)}{d\eta}+\left(\varepsilon d^2+ d^2 U_0{\eta^{(q-p)/2}(1-\eta)^{p/2}}\right)\psi(\eta)=0,\qquad 0<\eta<1, \quad p\geq 0,\quad q> p.
\end{align}
It is straightforward to verify that $\eta=0$ is a regular singular point with exponents $\{\pm d\sqrt{-\varepsilon }/{2}\}$ where $ \varepsilon <0$
and the singular point 
$\eta=1$ is also a regular singular point with exponents
$s= \{0,1/2\}$ utilizing the symmetric and anti-symmetric solutions \cite{downing2013}, respectively. From the classical theory of ordinary differential equation, we know that if $\eta=\infty$ is a regular singular point as well, then the differential equation \eqref{eq50} is a hypergeometric-type equation with exact solutions expressed in terms of the Gauss hypergeometric function ${}_2F_1$. This claim can be be easily verified in the case of $p=0$ and $q=2$ where the differential equation \eqref{eq50} turn to be Fuchsian with the three regular singular points at $\eta=0,1,$ and $\infty$.  
\vskip0.1true in
\noindent The point $\eta=\infty$ is an irregular singular point only if $p>2$ and $q>p$.  A reason for $V(x)=-V_0\sinh^4(\eta)/\cosh^6(\eta)$ to be a quasi-exactly solvable potential that will be our focus  because of the recent interest. In this case, equation \eqref{eq50} reads
\begin{align}\label{eq51}
\eta^2\,(1-\eta)\,\psi''(\eta)+\eta\left[1-\frac32\eta
\right]\psi'(\eta)+\frac14\left(\varepsilon d^2+ U_0\,d^2 \eta\,(1-\eta)^2\right)\,\psi(\eta)=0.
\end{align}
Because of the asymptotic behavior at the singular points, the exact solutions takes the form
\begin{align}\label{eq52}
\psi(\eta)=\eta^{\beta/2} (1-\eta)^s e^{-\alpha\eta/2}f(\eta),\qquad \beta,~\alpha>0,~s=0,1/2,\qquad 0<\eta<1,
\end{align}
we obtain, for the function $f(\eta)$, the following differential equation
\begin{align}\label{eq53}
f''(\eta)&
+\left[\frac{1+4s}{2(\eta-1)}+\frac{1+\beta}{\eta}-\alpha\right]f'(\eta)+\left[\frac{-\alpha+\beta+\beta^2+4s-4\alpha s+4\beta s }{4 (\eta-1)}+\frac{-2\alpha-\beta-2\alpha\beta-\beta^2-4s-4\beta s+\alpha^2}{4 \eta}
\right]f(\eta)=0
\end{align}
where $s=0,1/2$, $\beta=d\sqrt{-\varepsilon}$ and $\alpha=d\sqrt{U_0}$. Equation \eqref{eq53} can be written as
\begin{align}\label{eq54}
4\eta(\eta-1)f''(\eta)
&+\left[-4\alpha\eta^2+(6+4\alpha+4\beta+8s)\eta-4(1+\beta)\right]f'(\eta)\notag\\
&+\left[(\alpha(\alpha-2\beta-4s-3)\eta-\alpha^2 + 2 \alpha (1 + \beta) + (1 + \beta) (\beta + 4 s)
\right]f(\eta)=0
\end{align}
Using Theorem \ref{ThmII.1}, the necessary condition for the differential equation \eqref{eq54} to admit the polynomial solutions is
$$\alpha-2\beta=4n+4s+3$$
that implies the following eigenvalue spectra:
\begin{align}\label{eq55}
\varepsilon_n= -\frac{\left(4 n+4s+3-d \sqrt{U_0}\right)^2}{4 d^2},\qquad d^2U_0>(4n+4s+3)^2.
\end{align}
Again using Theorem \ref{ThmII.1}, the sufficient condition follows by the determinant
\begin{align}\label{eq56}
\left|\begin{array}{lllllllll}
\mu_0&\delta_1&0&0&0&\dots&0& 0&0\\
\gamma_1&\mu_1&\delta_2&0&0&\dots&0&0&0\\
0&\gamma_2&\mu_2&\delta_3&0&\dots&0&0&0\\
0&0&\gamma_3&\mu_3&\delta_4&\dots&0&0&0\\
\vdots&\vdots&\vdots&\ddots&\ddots&\ddots&\vdots&\vdots&\vdots\\
0&0&0&0&0&\dots&\gamma_{n-1}&\mu_{n-1}&\delta_{n}\\
0&0&0&0&0&\dots&0&\gamma_n&\mu_n\\
\end{array}\right|=0,
\end{align}
where
\begin{align*}
\mu_j&=-\alpha^2+2\alpha(2j+\beta+1)+(2j+\beta+1)(2j+\beta+4s),\qquad 
\gamma_j=4\alpha(1-j+n),\qquad
\delta_j=-4j(\beta+j).
\end{align*}
The exact solutions are given explicitly by the equation
\begin{align}\label{eq56}
\psi_n(\eta)=\, \eta^{d\sqrt{-\varepsilon_n}}\, (1-\eta)^s\, e^{-d\sqrt{U_0}\eta/2}\sum\limits_{k=0}^n\dfrac{\mathpzc{P}_{k}^n(\zeta)}{k!\,\left(1+\beta\right)_k}\,\left(-\frac{\eta}{4}\right)^k,\quad n=0,1,2,\dots,
\end{align}
where the coefficients $\mathpzc{P}_{k}^n(\zeta)$, with $\zeta\equiv\tau_{1,1}=\left(1+d \sqrt{-\varepsilon_n}\right) \left(d \sqrt{-\varepsilon_n}+4 s\right)+2 d \left(1+d \sqrt{-\varepsilon}\right) \sqrt{U_0}-d^2 U_{0}$  are evaluated using the following three-term recurrence relation
\begin{equation}\label{eq57}
\mathpzc{P}_{k+1}^n(\zeta)=\left(\zeta-2 k (1 + 2 \alpha + 2 \beta + 2 k + 4 s)\right)\mathpzc{P}_k^n(\zeta)-16k(k-n-1)\alpha(\beta+k)\mathpzc{P}_{k-1}^n(\zeta),~~\mathpzc{P}_{-1}^n(\zeta)=0, ~ \mathpzc{P}_{0}^n(\zeta)=1,
\end{equation}
with the square norm given by
\begin{align}\label{eq58}
\int_{\mathcal S} \left[\mathpzc{P}_{k}^n(\zeta)\right]^2\,\mathpzc{W}(\zeta)\,d\zeta&=2^{4k}k!\, \alpha^k (-n)_k\, \left(1+\beta\right)_k,\quad 0\leq k\leq n.
\end{align}
The first few solutions are:
\begin{itemize}
\item For $n=0$, 
we have
\begin{equation}
\varepsilon_0= -\frac{\left(4s+3-d \sqrt{U_0}\right)^2}{4 d^2},
\qquad
f_0(\eta)=1,
\end{equation}
subject to the root finding of the equation
\begin{equation}
7 d^2 U_0-12 d \sqrt{U_0}-3(4s+1)(4s+5)=0.
\qquad\mbox{where}\qquad
U_0>(4s+3)^2/d^2.
\end{equation}
The critical polynomial reads
\begin{align*}
\mathpzc{P}_{1}^0(\alpha,\beta;s)
=-\alpha^2 + 2 \alpha (1 + \beta) + (1 + \beta) (\beta + 4 s).
\end{align*}
\item For $n=1$, we have
\begin{equation}
\varepsilon_1= -\frac{\left(4s+7-d \sqrt{U_0}\right)^2}{4 d^2},
\qquad
f_1(\eta)=1-\left(\frac{\alpha^2 - 2 \alpha (1 + \beta)- (1 + \beta) (\beta + 4 s)}{4(1+\beta)}\right)\eta.
\end{equation}
subject to the root finding of the equation
\begin{align}
\alpha^4-4(2+\beta)\alpha^3&+2  (3 + 5 \beta + \beta^2 - 4 (2 + \beta) s)\alpha^2+4(1 + \beta) (3 + 4 \beta + \beta^2 + 4 n + 4 (3 + \beta) s) \alpha \notag\\
&+(1 + \beta) (3 + \beta) (\beta + 4 s) (2 + \beta + 4 s)=0
\end{align}
where $\alpha=d\sqrt{U_0},~\beta=(7+4s-d\sqrt{U_0})/2$ and $d^2U_0>(7+4s)^2$. The critical polynomial in this case reads
\begin{align*}
\mathpzc{P}_{2}^1(\alpha,\beta;s)=16 \alpha (1 + \beta) + (-\alpha^2 + 
    2 \alpha (1 + \beta) + (1 + \beta) (\beta + 4 s)) (-\alpha^2 + 
    2 \alpha (1 + \beta) + (1 + \beta) (\beta + 4 s) - 
    2 (3 + 2 \alpha + 2 \beta + 4 s)).
\end{align*}

\item For $n=2$, we have
\begin{align}
\varepsilon_2&= -\frac{\left(4s+11-d \sqrt{U_0}\right)^2}{4 d^2},\notag\\
f_2(\eta)&=1-\frac{\alpha^2 - 2 \alpha (1 + \beta)- (1 + \beta) (\beta + 4 s)}{4(1+\beta)}\eta\notag\\
&+\frac{32 \alpha (1 +\beta) - (6+ \alpha^2 + 2 \alpha (1- \beta)+ 4 s +
    \beta (3 - \beta - 4 s)) (-\alpha^2 + 
    2 \alpha (1 + \beta) + (1 + \beta) (\beta + 4 s))}{32(1+\beta)(2+\beta)}\eta^2,
\end{align}
subject to the the root finding of the equation
\begin{align}
 32 \alpha (2 + \beta) &(-\alpha^2 + 
    2 \alpha (1 + \beta) + (1 + \beta) (\beta + 4 s)) \notag\\
&+ (-\alpha^2 + 
    2 \alpha (5 + \beta) + (5 + \beta) (4 + \beta + 4 s)) (32 \alpha (1 + 
       \beta) + (-\alpha^2 + 
       2 \alpha (1 + \beta) + (1 + \beta) (\beta + 4 s))\notag\\
&\times (-\alpha^2 + 
       2 \alpha (3 + \beta) + (3 + \beta) (2 + \beta + 4 s)))=0
\end{align}
where $\alpha=d\sqrt{U_0},~\beta=(11+4s-d\sqrt{U_0})/2$ and $d^2U_0>(11+4s)^2.$ The critical polynomial reads in this case as
\begin{align*}
\mathpzc{P}_{3}^2(\alpha,\beta;s)&=
32 \alpha (1 + \beta) (-\alpha^2 + 
    2 \alpha (1 + \beta) + (1 + \beta) (\beta + 4 s) - 
    4 (5 + 2 \alpha + 2 \beta + 4 s))\notag\\
& + ( 
    2 \alpha (1 + \beta) -\alpha^2+ (1 + \beta) (\beta + 4 s)) (32 \alpha (2 + 
       \beta) + (-\alpha^2 + 
       2 \alpha (1 + \beta) + (1 + \beta) (\beta + 4 s) - 
       2 (3 + 2 \alpha + 2 \beta + 4 s)) \notag\\
&\times(-\alpha^2 + 
       2 \alpha (1 + \beta) + (1 + \beta) (\beta + 4 s) - 
       4 (5 + 2 \alpha + 2 \beta + 4 s)))
\end{align*}
\end{itemize}
\section{Conclusions}
\noindent We have given the sufficient and necessary conditions that ensure the existence of the polynomial solutions of a general confluent Heun equation. It is indeed exciting to know that these polynomial solutions are the source of finite sequences of orthogonal polynomials. The number of the orthogonal polynomials in each sequence equal to the degree of the polynomial solution of the confluent Heun equation. Although we discussed several properties of these orthogonal polynomials, a general theory that explain the zeros and evaluation of the discrete weight function will be of great interest. We hope the present study will stimulate further research on the theory of finite orthogonal polynomials.

\section{Acknowledgments}
\medskip
\noindent Partial financial support of this work under Grant No. GP249507 from the 
Natural Sciences and Engineering Research Council of Canada
 is gratefully acknowledged. 
\section*{References}


\begin{thebibliography}{00}
\bibitem{ron1995} A. Ronveaux (Ed.) \emph{Heun's Differential Equations}, The Clarendon Press Oxford University Press, New York (1995). 
\bibitem{dec1978a} A. Decarreau, M. C. Dumont-Lepage, P. Maroni, A. Robert, and A. Ronveaux, \emph{Formes canoniques
des \'equations confluentes de l'\'equation de Heun}, Ann. Soc. Sci. Bruxelles. \textbf{T92(I-II)} (1978) 53 - 78.
\bibitem{dec1978b} A. Decarreau, P. Maroni, and A. Robert, \emph{Sur les \'equations confluentes de l'\'equation de Heun}, Ann.Soc. Sci. Bruxelles. \textbf{T92(III)} (1978) 151 - 189.
\bibitem{ru2011} R. Boyack and J. Lekner, \emph{Confluent Heun functions and separation of variables in spheroidal coordinates} J. Math. Phys. \textbf{52} (2011) 073517.

\bibitem{chris2011} M. S. Cunha and H. R. Christiansen, \emph{Confluent Heun functions in gauge theories on thick braneworlds}, Phys. Rev.D \textbf{84} (2011) 085002.
\bibitem{cheb2004}  E.S. Cheb-Terrab, \emph{Solutions for the General, Confluent and Biconfluent Heun equations and their connection with Abel equations}, J. Phys. A: Math. Gen. \textbf{37} (2004) 9923.
\bibitem{jaick2008} L.J. El-Jaick, D. B. Bartolomeu D. B. Figueiredo, \emph{On certain solutions for Confluent and Double-Confluent Heun equations}, J. Math. Phys. \textbf{49} (2008) 083508.
\bibitem{saad2009} R. L. Hall, N. Saad, and K. D. Sen, \emph{Soft-core Coulomb potentials and Heun's differential 
equation}, J. Math. Phys. J. Math. Phys. \textbf{51} (2010) 022107. 
\bibitem{fiz2010} P. P. Fiziev, \emph{Novel relations and new properties of confluent Heun's functions and their derivatives of arbitrary order}, J. Phys. A: Math. Theor. \textbf{43} (2010) 035203.
\bibitem{fiz2011} P. P. Fiziev and D. Staicova, \emph{Application of the confluent Heun functions for finding the quasinormal modes of nonrotating black holes}, Phys. Rev. D \textbf{84} (2011) 127502.
\bibitem{Favard} J. Favard, \emph{Sur les polynomes de Tchebicheff,} C. R. Acad. Sci. Paris \textbf{200} (1935) 2052 - 2053.
\bibitem{chihara} T. S. Chihara, \emph{
An introduction to orthogonal polynomials}, Gordon and Breach, 1978;
reprinted, Dover, 2011.
\bibitem{ismail} M. E. H. Ismail,
\emph{Classical and quantum orthogonal polynomials in one variable}, Cambridge
University Press, 2005; corrected reprint, 2009.
\bibitem{bender1996} C. M. Bender, G. V. Dunne, \emph{Quasi-exactly solvable systems and orthogonal polynomials},  J. Math. Phys. \textbf{37} (1996) 6 - 11. 

\bibitem{downing2013} C. A. Downing, \emph{On a solution of the Schr\"odinger equation with a hyperbolic double-well potential}, J. Math. Phys. \textbf{54} (2013) 072101.

\bibitem{zhang} W. Fa-Kai, T]Y. Zhan-Ying, L. Chong, Y. Wen-Li and Z. Yao-Zhong, \emph{Exact polynomial solutions of Schr\"odinger equation with various hyperbolic potentials,} Commun. Theor. Phys. \textbf{61} (2014) 153 - 159.
\bibitem{agn} A. Krajewska, A. Ushveridze, and Z. Walczak, \emph{Bender–Dunne orthogonal polynomials general theory}, Mod. Phys. Lett. A \textbf{12} (1997) 1131
\bibitem{saad2006} N. Saad, R. L. Hall, and H. Ciffci, \emph{Sextic anharmonic oscillators and orthogonal polynomials}, J. Phys. A: Math. Gen. \textbf{39} (2006) 8477.
\bibitem{Chadan} K. Chadan, N.N. Khuri, A. Martin and T.T. Wu, \emph{Bound States in one and two spatial dimensions}, J. Math.Phys. \textbf{44} (2003) 406, arXiv:math-ph/0208011.

\end{thebibliography}
\end{document}